\def \Z {\mathbb Z}
\def \N {\mathbb N}
\def \cM {\mathcal{M}}
\def \cC {\mathcal{C}}
\def \cG {\mathcal{G}}
\def \cB {\mathcal{B}}
\def \cA {\mathcal{A}}
\def \cU {\mathcal{U}}
\newcommand{\ac}[1]{\left\{#1\right\}}
\newcommand{\pa}[1]{\left(#1\right)}
\newcommand{\pr}[1]{\mathbb{P}\left(#1\right)}
\newdimen\AAdi%
\newbox\AAbo%
\def\AAk#1#2{\setbox\AAbo=\hbox{#2}\AAdi=\wd\AAbo\kern#1\AAdi{}}%
\newtheorem{defi}{Definition}[section]
\newtheorem{prop}[defi]{Proposition}
\newtheorem{lemm}[defi]{Lemma}
\newtheorem{coro}[defi]{Corollary}
\newtheorem{theo}[defi]{Theorem}
\begin{document}

\title[A non-ergodic PCA with a unique invariant measure]{A
  non-ergodic probabilistic cellular automaton with a unique 
  invariant measure}
\author{Philippe {\sc Chassaing}}
\address{Institut \'Elie Cartan,
Univ. Henri Poincar\'e,
BP 239, 54506 Vandoeuvre-les-Nancy Cedex, France}
\email{chassain@iecn.u-nancy.fr}
\author{Jean {\sc Mairesse}}
\address{LIAFA,
    CNRS et Univ. Paris 7, case 7014, 
75205 Paris Cedex 13, France}
\email{mairesse@liafa.jussieu.fr.}

\begin{abstract}
We exhibit a Probabilistic Cellular Automaton (PCA) on $\{0,1\}^\Z$ with a
neighborhood of size 2 which is non-ergodic although it has a unique invariant
measure. This answers by the negative an old open question on whether uniqueness of the
invariant measure implies ergodicity for a PCA. 
\end{abstract}

\subjclass[2000]{Primary: 60K35, 60J05; Secondary: 37B15, 68Q80}

\keywords{Probabilistic cellular automaton; interacting particle
  system; ergodicity}

\date{\today} 

\maketitle

\section{\bf{Introduction}}

Consider a random process on $\Sigma^{\Z^d}$, where $\Sigma$ is a
finite set, with local interactions and a translation
invariant dynamic. There are two natural instanciations, one with
asynchronous updates of the sites of $\Z^d$, and one with synchronous
updates. In the first case, the model is a continuous time Markov
process, known as a (finite range) {\em Interacting Particle System (IPS)}. In the second
case, the model is a discrete time Markov chain known as a {\em
  Probabilistic Cellular Automaton (PCA)}. 

The relevance of 
IPS in statistical mechanics, as well as in many other
contexts, is well established. Let us mention a couple of motivations
for studying PCA.  First, the investigation of
fault-tolerant 
computational models was the motivation for the Russian school~\cite{toombook,gacs}. Second, PCA appear in 
combinatorial problems related to the enumeration of directed
animals~\cite{LeMa}. 
Third, in the context of the classification of (deterministic) cellular
automata (Wolfram's program),
robustness to random errors can be used as a discriminating
criterion~\cite{FMSE}. 

\medskip

For IPS and PCA, the first question is to study the equilibrium
behavior. An equilibrium is characterized by an {\em invariant measure},
that is a probability measure on the state space which is left
invariant by the dynamic. An invariant measure $\mu$ is {\em
  attractive} if, for any initial condition, the state of the system converges (weakly) to $\mu$
as time goes on. 

By a compactness argument, there always exists 
at least one invariant measure. Therefore, there are, a priori, three
possible situations:

\begin{enumerate}
\item several invariant measures;
\item a unique invariant measure which is not attractive;
\item a unique invariant measure which is attractive. 
\end{enumerate}

In the last case, which corresponds to the nicest possible situation,
the model is said to be {\em ergodic}. Roughly, an ergodic system completely
forgets about its initial condition, while a non-ergodic one 
remembers something forever. 

A classical foundational
question is whether the intermediate case exists. In other words, does
uniqueness of the invariant measure imply convergence to it~? 
For monotone systems, the intermediate case
does not exist. But in general, the question is open. 

For IPS, this question is {\em Open Problem 4} in Chapter 1 of the
classical textbook by Liggett~\cite{ligg85}. In \cite{moun95},
Mountford proves that the intermediate case does not exist for
1-dimensional IPS (that is $d=1$). Quoting 
\cite{moun95}, ``it seems more than plausible that the conclusion
(...) is true in higher dimensions''. However,  the question
remains unsettled.
For PCA, the same question is {\em Unsolved problem 3.4.3} in
Toom~\cite{toom95}, or {\em Unsolved
problem 5.7} in Toom~\cite{toom01}.  

\medskip

In the present paper, we answer the question for PCA by exhibiting a
1-dimensional PCA, model $A$, corresponding to the intermediate case
(Theorem \ref{th-main}). There is a
unique invariant measure of the form $(\mu_0 +\mu_1)/2$ and the PCA
maps $\mu_0$ to $\mu_1$ and $\mu_1$ to $\mu_0$. 
Starting from an initial measure $\mu_0$, the probability
measure of the state of the system is $\mu_0$ at even times and
$\mu_1$ at odd times. Therefore there is no convergence. 

\medskip

Observe that the situations for IPS and PCA are
different: in 1-d, the intermediate case exists for PCA, and not for
IPS. This is 
consistent with the situation for Markov processes on a finite state
space: in discrete time, periodic
phenomena may occur which result in the existence of the intermediate
case; in continuous time, the intermediate case does not exist.  

\medskip

To prove the result for model $A$, we introduce two auxiliary PCA.
The first one, model $B$, corresponds to independently moving
particles annihilating when they meet ($p+p\rightarrow \varnothing$). The second one, model $C$, 
corresponds to independently moving
particles merging when they meet ($p+p\rightarrow p$). 
We compute exactly the evolution of the one-dimensional marginals 
for model $C$  (Theorem \ref{th-density}) and models $A$ and $B$
(Prop. \ref{pr-coupling}) starting from a ``full'' configuration. In
particular, it proves
that the speed of convergence to the
invariant measure is of order $1/\sqrt{n}$ for the three models.


\medskip

Continuous-time versions of models $B$ and $C$ have been studied in
the IPS literature under the names of {\em annihilating random walks} and
{\em coalescing random walks}, respectively,
see~\cite{arra,BrGr,grif79}. The PCA and IPS versions of $B$ and $C$ share
the same features: ergodicity with the invariant
measure being the ``all empty'' Dirac
measure, and with similar and subexponential speed of convergence. In the
IPS setting, the
asymptotic speed of convergence was given by Bramson $\&$
Griffeath~\cite{BrGr} for model $C$, and by Arratia~\cite{arra} for model $B$. 
Also, the coupling between the models $B$
and $C$, that we use in 
Section \ref{se-speedAB}, already appears in Griffeath~\cite[Ch. 3, Sec. 5]{grif79} and in Arratia~\cite{arra} in the continuous-time
setting. The novelty is that we get
exact computations for the PCA models, as opposed to asymptotic
results for the IPS ones. 
At last, let us mention that IPS versions of models $B$ and $C$ on a
{\em finite} set of sites have also been studied, see for instance \cite{ayy} for $B$, 
\cite{MoSq} for $C$, and the references therein.


\section{\bf{Probabilistic Cellular Automaton}}

Let $\Sigma$ be a finite set.
Denote by
$\cM(\Sigma)$ the set of probability measures on $\Sigma$. Let us equip
$X=\Sigma^{\Z}$ with the product topology. Denote by $\cM(X)$ the set of probability measures on $X$ for the
Borelian $\sigma$-algebra. Weak convergence of $(\mu_n)_n$ to $\mu$ is
denoted by $\mu_n \stackrel{w}{\longrightarrow} \mu$. 
Let $K$ be a finite subset of $\Z$ and consider $x\in \Sigma^K$. The
\emph{cylinder} defined by $x$ is the set
\[
* x * =  \Bigl\{u\in \Sigma^\Z, \ \forall k
\in K, u_k=x_k \Bigr\}\:.
\]

Given $k\in \Z$ and 
$V=(v_1,\ldots,v_n)\in \Z^n$, we use the notation $k+V$ for
$(k+v_1,\ldots,k+v_n)$, and the notation $V(K)$ for $\{ i \mid \exists
k \in K, \exists v\in V, i =k+v\}$. 

\medskip

Let us introduce probabilistic cellular automata, restricting
ourselves to 1-dimensional models. 

\begin{defi}\label{de-PCA}
The \emph{alphabet} is a finite set $\Sigma$; the set of \emph{sites} is 
$\Z$. The set of \emph{configurations} is $X=\Sigma^{\Z}$. 
Given $V \in \Z^n$, a \emph{transition function} of \emph{neighborhood} $V$
is a function $f: \Sigma^{V} \rightarrow  \cM(\Sigma)$. 
The \emph{probabilistic cellular automaton} (PCA) $F$ of transition
function $f$ is the application 
$\cM(X) \rightarrow  \cM(X), \ \mu \mapsto \mu F$
defined on cylinders by: $\forall K$, $\forall y \in \Sigma^K$, 
$$
\mu F(* y *)=\sum_{x \in \Sigma^{V(K)}}\mu(* x *)\prod_{k\in K} f((x_i)_{i\in
  k+V}
)(y_k)\:.
$$
\end{defi}

Let us look at how $F$ acts on a Dirac measure $\delta_x$. The value
of all the sites are updated. 
The value $x_k$ of the $k$-th site is
changed into the letter $a\in \Sigma$ with probability $f((x_i)_{i\in
  k+V} )(a)$, {\em independently} of the evolution of the other sites. 

\medskip

By specializing Definition \ref{de-PCA}, we recover two famous models:

\begin{itemize}

\item Assume that $V=\{0\}$, then all the sites behave
  independently. The restriction of the PCA to one site is a Markov
  chain evolving on $\Sigma$. Conversely, any Markov chain on a
  finite state space $E$ can be realized as (a restriction of) a PCA on the alphabet $E$   with neighborhood $V=\{0\}$. 

\item Assume that the transition function $f$ is such that: $\forall u
  \in \Sigma^{V}$, $f(u)$ is a Dirac probability measure. Then we may
  view $f$ as a function $\Sigma^{V} \rightarrow  \Sigma$. We obtain a
  (deterministic) \emph{cellular automaton}. 

\end{itemize}

\medskip

A PCA $F$ may be viewed as a Markov chain on the state space $\Sigma^\Z$. 
Thus we borrow the classical terminology of Markov chains. 

\medskip

\begin{defi}
An \emph{invariant (probability) measure} of $F$ is a
probability measure $\mu \in \cM(X)$ such that $\mu F =
\mu$. The PCA $F$ is \emph{ergodic} if it has a unique invariant
measure which is attractive, i.e. if
\begin{equation}\label{eq-ergod}
\mbox{(i)} \ \bigl[ \exists ! \mu \in \cM(X), \ \mu F =\mu \bigr],
\qquad \mbox{(ii)} \ \bigl[ \forall \nu \in \cM(X), \ \nu F^n
  \stackrel{w}{\longrightarrow} \mu \bigr] \:.
\end{equation}
\end{defi}

\medskip

Consider for a moment a Markov chain on a finite state space with transition matrix
$P$. Let $\cG(P)$ be the graph of the matrix $P$. Classically, we have
\begin{align}\label{eq-finite}
\mbox{(i)}  & \iff  \cG(P) \mbox{ has a unique terminal component} \\
\mbox{(i)+(ii)} & \iff  \cG(P) \mbox{ has a unique terminal component
  which is aperiodic}\:. \nonumber
\end{align}
In particular, uniqueness of the invariant measure does not imply
ergodicity. The simplest example of a non-ergodic Markov chain with a
unique invariant measure is the following:  the state space is
$X=\{0,1\}$ and the transition matrix is
\begin{equation}\label{eq-P}
P = \left[ \begin{array}{cc} 0 & 1 \\ 1 & 0 \end{array} \right] \:.
\end{equation}
The unique invariant measure is $\mu=(\delta_0 + \delta_1)/2$ and for
$\nu= \delta_0$, we do not have $\nu P^n \stackrel{w}{\longrightarrow}
\mu$. 

\medskip

For PCA, it was an open question to know  if (i) implies (ii) in
(\ref{eq-ergod}). The purpose of the present paper is to settle the question by proposing a
non-ergodic PCA with a unique invariant measure. 

\medskip

To get a hint of the difficulty, consider for instance a PCA $F$ with neighborhood $V=\{0\}$. Recall that
each site behaves independently and as a finite Markov
chain $P$. As
recalled in (\ref{eq-finite}), $P$ may satisfy either $[\neg (i)]$, $[(i),\neg
  (ii)]$, or $[(i), (ii)]$. We show in Table \ref{tab:compare} how this gets reflected on the PCA
$F$.

\begin{table}[hbt]
\centering
\begin{tabular}{c|c}
Markov chain $P$ & $\quad$ PCA $F$ $\quad$  \\
\hline 
$\neg (i)$ & $\neg (i)$ \\
$(i),\neg (ii)$ &  $\neg (i)$ \\
$(i), (ii)$ & $(i), (ii)$ \\
& 
\end{tabular}
\caption{Finite Markov chain versus
  ``neighborhood 0 PCA''.}\label{tab:compare}
\end{table}

Let us justify the Table. 
If $\mu$ is an invariant measure of $P$, then the product measure
$\mu^{\otimes \Z}$ is an invariant measure of $F$. Therefore, if $P$
has several invariant measures, the same holds for $F$. 
Assume now that $P$ is ergodic with unique invariant measure
$\mu$. One proves easily that $\mu^{\otimes \Z}$ is attractive, so $F$
is ergodic. Let us concentrate now on the intermediate case for
$P$. If $P$ satisfies $[(i),\neg (ii)]$ then $\cG(P)$ has a unique terminal component which is 
periodic, say of period 2.  Let $(\mu_0 + \mu_1)/2$ be
the unique invariant measure of $P$.  Then $F$ has an
infinite number of invariant measures. Indeed, consider any $(u_i)_{i\in \Z} \in
\{0,1\}^\Z$, and let $(v_i)_{i\in \Z}$ be defined by $v_i=1-u_i$ for
all $i$. Then the probability measure $(\otimes_{i\in \Z} \mu_{u_i} +
\otimes_{i\in \Z} \mu_{v_i})/2$ is clearly an invariant measure of
$F$.

\section{\bf{Statement of the main result}}

\subsection{Model $A$}\label{modeleA}  Consider the PCA $F_A$ on the alphabet
$\Sigma=\{0,1\}$, with neighborhood $V=\{-1,0\}$, and transition
function $a$ defined by:  
\[
a(00)(1)=1/2, \quad a(01)(1) = 0, \quad a(10)(1) = 1, \quad
a(11)(1)=1/2 \:.
\]

\begin{figure}[!h]
\begin{center}
\includegraphics[scale=0.6]{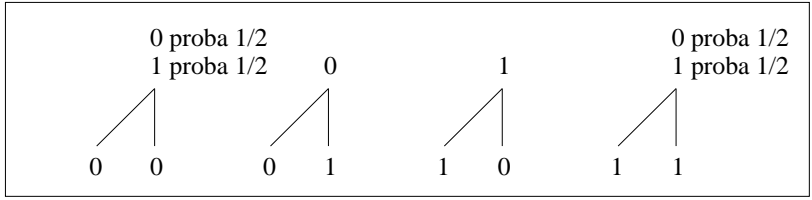}
\caption{The transition function of the PCA $F_A$.}\label{fi-modelA}
\end{center}
\end{figure}

A realization of the Markov chain is obtained as follows.
Consider the function 
\begin{align}\label{eq-A}
\cA: \quad \{0, 1\}^{\Z} \times \cU^{\Z} & \rightarrow  \{0, 1\}^{\Z} \\
(x_i)_{i\in \Z}, (u_i)_{i\in \Z} & \mapsto  (\tilde x_i)_{i\in \Z}\:,\nonumber
\end{align}
with $\cU=\{\uparrow,\rightarrow\}$, and
\[
\tilde x_i = \begin{cases} 0 & \mbox{if } x_{i-1}x_i =01 \mbox{ or } (x_{i-1}x_i, u_i)
  \in \bigl\{ (00, \rightarrow), (11,\uparrow) \bigr\}  \\
1  & \mbox{if } x_{i-1}x_i =10\mbox{ or } (x_{i-1}x_i, u_i)
  \in \bigl\{ (00, \uparrow), (11,\rightarrow) \bigr\} \:.
\end{cases}
\]
Let $U=(U_{i,j})_{(i,j)\in \Z\times \N}$ be a doubly-indexed sequence of i.i.d. r.v.'s with common law
\[\pr{U_{i,j}= \uparrow} = \pr{U_{i,j}=\rightarrow} = 1/2,\]
called the {\em update process}. Set $U_{n}=(U_{i,n})_{i\in \Z}$. Given    a $\{0, 1\}^{\Z}$-valued r.v. $X_{0}=(X_{i,0})_{i\in \Z}$, such that $U\perp X_{0}$, define the sequence of $\{0, 1\}^{\Z}$-valued r.v.'s
$(X_n)_{n\in \N}$ as follows:
\begin{align}\label{rec-A}
X_{n+1} = \cA(X_{n}, U_{n}).
\end{align}
Then $(X_n)_{n\in \N}$ is a realization of model $A$.  The process $U$ is used to randomly update the value of a site, when needed, with $\rightarrow$ being interpreted as ``keep'' and $\uparrow$ as ``switch'', and $X_{i,n}$ is the state of site $i$ at time $n$, so that $X_{n}=(X_{i,n})_{i\in \Z}$ denotes the state of the system at time $n$. 
\subsection{Invariant measure} Let $x=(01)^{\Z}$ be the configuration defined by:  $\forall n\in \Z$, $x_{2n}=0, \ x_{2n+1}=1$. The configuration  $(10)^{\Z}$ is defined similarly. 

\begin{theo}\label{th-main}
The PCA $F_A$ has a unique invariant measure which is
$\mu= (\delta_{(01)^{\Z}} +
\delta_{(10)^{\Z}})/2$. The PCA is non-ergodic. 
\end{theo}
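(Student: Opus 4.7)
The plan is to dispose of existence and non-ergodicity by direct inspection, then reduce uniqueness to a decay-of-defects statement and prove that decay via the auxiliary model $B$.

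For existence, observe that in both $(01)^\Z$ and $(10)^\Z$ every pair $x_{i-1}x_i$ equals $01$ or $10$, so the rule $a$ is deterministic and sends $(01)^\Z\mapsto(10)^\Z$ and $(10)^\Z\mapsto(01)^\Z$; averaging gives $\mu F_A=\mu$. For non-ergodicity, starting from $\nu_0=\delta_{(01)^\Z}$ the iterates $\nu_0 F_A^n$ alternate between $\delta_{(01)^\Z}$ and $\delta_{(10)^\Z}$ and so do not converge weakly to $\mu$.

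For uniqueness, call $i\in\Z$ a \emph{defect} of $x$ when $x_{i-1}=x_i$; the defect-free configurations are precisely $(01)^\Z$ and $(10)^\Z$. The crux is the claim that for every initial measure $\nu_0$ and every $i\in\Z$, $\nu_0F_A^n(x_{i-1}=x_i)\to 0$ as $n\to\infty$. Granted this, any invariant $\nu$ satisfies $\nu(x_{i-1}=x_i)=0$ for all $i$, so $\nu$ is supported on $\{(01)^\Z,(10)^\Z\}$; writing $\nu=\alpha\delta_{(01)^\Z}+(1-\alpha)\delta_{(10)^\Z}$ and using that $F_A$ swaps the two Dirac masses, the equation $\nu F_A=\nu$ forces $\alpha=1/2$, hence $\nu=\mu$.

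To prove the claim, I would couple the defects of $F_A$ with the particles of model $B$. Inspection of $\cA$ shows that, driven by the very updates $(U_{i,n})$, each defect at position $j$ independently either stays at $j$ or moves to $j+1$ (each with probability $1/2$), and particles landing on the same site annihilate pairwise; this is exactly the dynamics of model $B$ with a totally asymmetric step law. By Proposition \ref{pr-coupling} the particle density in model $B$ started from the full configuration is of order $1/\sqrt n$, and the standard coupling ``annihilating $\leq$ coalescing'' together with the monotonicity of model $C$ in its initial density extends this to $\nu_0F_A^n(x_{i-1}=x_i)=O(1/\sqrt n)$ for \emph{any} initial $\nu_0$. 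The main technical point, and the heart of the argument, is the defect-to-particle coupling itself: the defect process is not \emph{a priori} Markov on its own, and one must verify that the same variables $(U_{i,n})$ that drive $F_A$ produce the correct model-$B$ dynamics on the defects. Proposition \ref{pr-coupling} embodies precisely this coupling in the form needed here.
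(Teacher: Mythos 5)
Your proposal is correct and follows essentially the same route as the paper: invariance and non-ergodicity by direct inspection, then uniqueness by mapping the defects of $F_A$ to the annihilating particles of model $B$ (the paper's Lemma \ref{le-commut}), dominating model $B$ by the coalescing model $C$ and using the monotonicity of $C$ (Lemmas \ref{le-domin} and \ref{le-monot}), and invoking the decay of the particle density started from the full configuration. The only slip is one of attribution: the defect-to-particle coupling and the $O(1/\sqrt{n})$ decay are not Proposition \ref{pr-coupling} (a downstream refinement) but Lemma \ref{le-commut} and Theorem \ref{th-density} respectively, the latter being the one substantive computation your outline leaves unproved.
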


On configurations without $00$ and $11$, the PCA acts as the
translation shift. Therefore $\mu= (\delta_{(01)^{\Z}} +
\delta_{(10)^{\Z}} )/2$ is an invariant measure. 
Assume that it is the unique one. Then $\mu$ is non-attractive, the situation being the same as for (\ref{eq-P}):
consider $\nu= \delta_{(01)^{\Z}}$, then $\nu F_A^n =
\delta_{(01)^{\Z}}$ if $n$ is even, and $\nu F_A^n =
\delta_{(10)^{\Z}}$ if $n$ is odd. 

\medskip

The purpose of Sections \ref{se-aux} and \ref{monotone} is to prove Theorem
\ref{th-main}. 





\section{\bf{Two auxiliary models}}\label{se-aux}

We now define two new PCA, that we call respectively {\em model
  $B$} and {\em model $C$}. 
For both models, the alphabet is $\Sigma= \{\circ, \bullet\}$ and the set of sites
is $\Z$. 
Given a configuration $u \in \{\circ, \bullet\}^{\Z}$, the following interpretation holds: 
if $u_i= \circ$, the site $i$ is ``empty''; if $u_i= \bullet$, the
site $i$ contains a ``particle''. At a given time step, a particle decides (independently of
the others and independently of the past) to remain at its site with
probability 1/2, or to jump to the site on the right with
probability 1/2. In model $B$, if two particles collide, then they
annihilate. In model $C$, if two particles collide, they
are merged into
one particle. Let us define the models more formally.

\subsection{Model $B$.}
\label{modeleB} 
It is the Markov chain on $\{\circ, \bullet\}^{\Z}$ defined as follows. Consider the function 
\begin{align}\label{eq-B}
\cB: \quad \{\circ, \bullet\}^{\Z} \times \cU^{\Z} & \rightarrow  \{\circ, \bullet\}^{\Z} \\
(y_i)_{i\in \Z}, (u_i)_{i\in \Z} & \mapsto  (\tilde y_i)_{i\in \Z}\:, \nonumber
\end{align}
with $\cU=\{\uparrow,\rightarrow\}$, and
\[
\tilde y_i = \begin{cases} \bullet & \mbox{if } (y_{i-1}y_i, u_{i-1}u_i)
  \in \{ (\bullet\circ, \rightarrow \cU), (\circ\bullet, \cU\uparrow),
  (\bullet\bullet, \uparrow\uparrow), (\bullet\bullet,
  \rightarrow\rightarrow) \} \\
\circ & \mbox{otherwise }\:.
\end{cases}
\]
Let $U$ be an update process, defined as in Section \ref{modeleA}. 
Given    a $\{\circ, \bullet\}^{\Z}$-valued r.v. $Y_{0}$, such that $U\perp Y_{0}$, define the sequence of $\{\circ, \bullet\}^{\Z}$-valued r.v.'s $(Y_n)_{n\in \N}$ as follows:
\begin{align}\label{rec-B}
Y_{n+1} = \cB(Y_{n}, U_{n}).
\end{align}
Then $(Y_n)_{n\in \N}$ is a realization of model $B$. 

\medskip

\begin{figure}[!h]
\begin{center}
\includegraphics[scale=0.65]{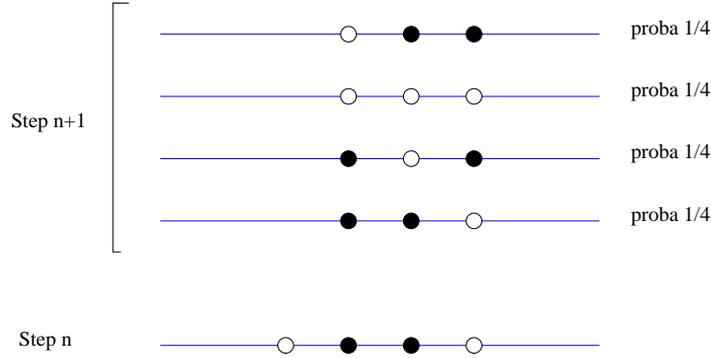}
\caption{The transition function of model $B$.}\label{fi-modelB}
\end{center}
\end{figure}

\medskip

\noindent
{\bf Remarks.}
In the above presentation, model $B$ is a Markov chain with synchronous updates and local
interactions, but not {\em stricto sensu} a PCA. Indeed, if $Y_0$ is
deterministic, then the r.v.'s $Y_{i,1}$ and $Y_{i+1,1}$ are not
  independent, since they are updated using the non-disjoint r.v.'s
  $\{U_{i-1,0},U_{i,0}\}$ and $\{U_{i,0},U_{i+1,0}\}$.
However, it is possible to give a PCA presentation of  model $B$ on a larger alphabet. Define the sequence of $\bigl(\{\circ, \bullet\}\times \mathcal U\bigr)^{\Z}$-valued r.v's $(\widetilde{Y}_n)_{n\in \N}$ by $\widetilde{Y}_n = (Y_n, U_{n})$. We have: 
\[
(\widetilde{Y}_{n+1})_i =
\bigl( \cB(\widetilde{Y}_{n})_i, \ U_{i,n+1} \bigr) \:.
\]
Thus $(\widetilde{Y}_n)_n$ is a realization of a PCA on the alphabet $\{\circ, \bullet\}\times \mathcal U$, with neighborhood $V=\ac{-1,0}$. The same remark holds for model $C$ below. 

The continuous time version of model $B$, with exponential holding times, is called an \textit{annihilating random walk} (cf. \cite[Ch. 3, Sec. 5]{grif79}).

\medskip

\subsection{Model $C$.} \label{modeleC}
It is the Markov chain on $\{\circ, \bullet\}^{\Z}$ defined as
follows. Consider the function 
\begin{align*}
\cC: \quad \{\circ, \bullet\}^{\Z} \times \cU^{\Z} & \rightarrow  \{\circ, \bullet\}^{\Z} \\
(z_i)_{i\in \Z}, (u_i)_{i\in \Z} & \mapsto  (\tilde z_i)_{i\in \Z}\:,
\end{align*}
with
\[
\tilde z_i = \begin{cases} \bullet & \mbox{if } (z_{i-1}z_i, u_{i-1}u_i)
  \in \{ (\bullet\circ, \rightarrow \cU), (\circ\bullet, \cU\uparrow),
  (\bullet\bullet, \uparrow\uparrow), (\bullet\bullet,
  \rightarrow \cU) \} \\
\circ & \mbox{otherwise }\:.
\end{cases}
\]
Let $U$ be an update process. 
Given    a $\{\circ, \bullet\}^{\Z}$-valued r.v. $Z_{0}$, such that $U\perp Z_{0}$, define the sequence of $\{\circ, \bullet\}^{\Z}$-valued r.v's $(Z_n)_{n\in \N}$ as follows:
\[Z_{n+1} = \cC(Z_{n}, U_{n}) \:.\]
Then $(Z_n)_{n\in \N}$ is a realization of model $C$. 

Again, the continuous time version of model $C$, with exponential holding times, is called a \textit{coalescing random walk} (cf. \cite[Ch. 2, Sec. 9]{grif79}).

\begin{figure}[Htb]
\begin{center}
\includegraphics[scale=0.65]{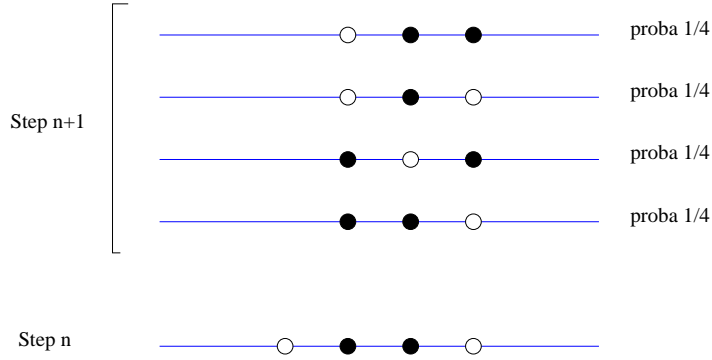}
\caption{The transition function of model $C$.}\label{fi-modelC}
\end{center}
\end{figure}

\subsection{Links between models $A$, $B$, and $C$}

One-step transition of the model $B$, resp. $C$, defines the mapping 
\begin{align*}
F_B: \quad \cM(\{\circ, \bullet\}^{\Z}) & \longrightarrow  \cM(\{\circ,
\bullet\}^{\Z}) \\
\mu & \longmapsto  \mu F_B \:,
\end{align*}
respectively, 
\begin{align*}
F_C: \quad \cM(\{\circ, \bullet\}^{\Z}) & \longrightarrow  \cM(\{\circ,
\bullet\}^{\Z}) \\
\mu & \longmapsto  \mu F_C \:.
\end{align*}
Define 
\begin{align*}
\varphi: \quad \{0,1\}^\Z & \longrightarrow  \{\circ, \bullet\}^\Z \\ 
 (x_i)_{i\in \Z} & \longmapsto  (y_i)_{i\in \Z} \:,
\end{align*}
with 
\[
y_i = \begin{cases} \bullet & \mbox{if } x_ix_{i+1} \in \{ 00,11\} \\
 \circ & \mbox{if } x_ix_{i+1} \in \{ 01,10 \} 
\end{cases}
\]
By extension, define  $\varphi: \ \cM(\{0,1\}^\Z) \rightarrow \cM(
\{\circ, \bullet\}^\Z)$. 
 
\begin{lemm}\label{le-commut}
The diagram below is commutative:
\begin{equation*}
\xymatrix{
\cM(\{0,1\}^\Z)  \ar[d]^{\varphi}  \ar[r]^-{F_A}  &
\cM(\{0,1\}^\Z)  \ar[d]^{\varphi} \\ 
\cM(\{\circ, \bullet\}^\Z)  \ar[r]^-{F_B}      & \cM(\{\circ, \bullet\}^\Z)
}
\end{equation*}
If $(X_n)_{n\in \N}$ is a realization of the Markov chain $A$, then
$(\varphi(X_n))_{n\in \N}$ is a realization of the Markov chain $B$. 
\end{lemm}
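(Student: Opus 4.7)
The plan is to prove the pathwise statement directly, which implies the diagram commutation. Suppose $(X_n)_{n\in\N}$ is built from $X_0$ and the update process $U$ via recursion (\ref{rec-A}), set $Y_n = \varphi(X_n)$, and define the shifted update process $U'$ by $U'_{i,n} = U_{i+1,n}$. Since $U$ is i.i.d.\ and independent of $X_0$, the translate $U'$ has the same joint law and is independent of $Y_0 = \varphi(X_0)$. The main claim will be that $Y_{n+1} = \cB(Y_n, U'_n)$ almost surely, which identifies $(Y_n)$ as a realization of model $B$ started from $\varphi(X_0)$. Commutativity of the diagram then follows by taking $X_0 \sim \mu$ and comparing the common law of $\varphi(X_1)$, which equals both $\varphi(\mu F_A)$ and $\varphi(\mu) F_B$.

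\medskip

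To prove the pathwise identity, I will first reformulate $\cA$ more compactly. Inspection of (\ref{eq-A}) shows that if $x_{i-1} \neq x_i$, i.e.\ $\varphi(x)_{i-1} = \circ$, then $\tilde x_i = x_{i-1}$ regardless of $u_i$; while if $x_{i-1} = x_i$, i.e.\ $\varphi(x)_{i-1} = \bullet$, then $\tilde x_i = x_i$ when $u_i = \rightarrow$ and $\tilde x_i = 1-x_i$ when $u_i = \uparrow$. Consequently the bit $\varphi(\tilde x)_i = \BBone[\tilde x_i = \tilde x_{i+1}]$ is a deterministic function of $(\varphi(x)_{i-1}, \varphi(x)_i, u_i, u_{i+1})$ alone, and the composition $\varphi \circ \cA$ can therefore be realized as a local rule on $\{\circ,\bullet\}^\Z$ driven by the pairs of update variables $(u_i, u_{i+1})_{i\in\Z}$.

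\medskip

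The proof then reduces to checking, in each of the four cases $(\varphi(x)_{i-1}, \varphi(x)_i) \in \{\circ,\bullet\}^2$, that this local rule agrees with the rule defining $\cB$ after the index shift $u_{i-1} \mapsto u_i$, $u_i \mapsto u_{i+1}$. Case $(\circ,\circ)$: the forced identity $x_{i-1} = x_{i+1} \neq x_i$ gives $\tilde x_i = x_{i-1} \neq x_i = \tilde x_{i+1}$, hence $\varphi(\tilde x)_i = \circ$. Cases $(\circ,\bullet)$, $(\bullet,\circ)$, $(\bullet,\bullet)$ yield, respectively, $\varphi(\tilde x)_i = \bullet$ iff $u_{i+1} = \uparrow$, iff $u_i = \rightarrow$, and iff $u_i = u_{i+1}$. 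After the index shift these are exactly the four clauses of (\ref{eq-B}) producing the value $\bullet$, which completes the verification. The main obstacle is purely bookkeeping: both $\cA$ and $\cB$ are defined on the neighborhood $\{-1,0\}$, but composing $\cA$ with $\varphi$ introduces a one-site offset in the dependence on $U$, which is absorbed by the translation from $U$ to $U'$; this is harmless in law because $U$ is i.i.d.\ and hence translation-invariant.
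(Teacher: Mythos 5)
Your proof is correct, and at its core it is the same argument as the paper's: a finite local case-check showing that $\varphi$ conjugates $\cA$ into $\cB$, followed by the observation that this pathwise identity yields both the commuting diagram and the realization statement. The paper does the check by brute force over $32$ configurations (Table \ref{tab:commute}), whereas you reduce it to $4$ cases by noting that, thanks to the flip symmetry $x\mapsto 1-x$, the value $\varphi(\tilde x)_i$ factors through $\bigl(\varphi(x)_{i-1},\varphi(x)_i,u_i,u_{i+1}\bigr)$; your four case outcomes ($\circ$ always; $\bullet$ iff $u_{i+1}=\uparrow$; $\bullet$ iff $u_i=\rightarrow$; $\bullet$ iff $u_i=u_{i+1}$) are all verified correctly. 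The one substantive point where you diverge from the paper is the index offset, and there you are in fact the more careful of the two: with $\varphi(x)_i$ determined by $x_ix_{i+1}$ as literally defined, the exact pathwise identity is $\varphi(\cA(x,u))=\cB(\varphi(x),\sigma u)$ with $(\sigma u)_i=u_{i+1}$, which is what you prove; the paper's intermediate claim (\ref{eq-toprove}) asserts the identity with the unshifted $u$, and its table is implicitly computing with the translated convention $y_i=\BBone[x_{i-1}=x_i]$. Your remark that the shift is harmless --- $\sigma U$ is again an update process with the right law, independent of $\varphi(X_0)$, since $U$ is i.i.d.\ and independent of $X_0$ --- is exactly what is needed to conclude, so no gap remains.
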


\begin{proof}
Recall that $\cA$ and $\cB$ are defined in (\ref{eq-A}) and (\ref{eq-B})
respectively. We are going to prove that:
\begin{equation}\label{eq-toprove}
\varphi \circ \cA = \cB \circ (\varphi, \mbox{Id}) \:. 
\end{equation}
The statement of the lemma follows. Set
\[
(x_i)_i,(u_i)_{i} \stackrel{\cA}{\longmapsto} (\tilde x_i)_{i}
\stackrel{\varphi}{\longmapsto} (\tilde y_i)_i, \qquad  (x_i)_i,(u_i)_{i}
\stackrel{\varphi,\mbox{Id}}{\longmapsto} (y_i)_{i}, (u_i)_i
\stackrel{\cB}{\longmapsto} (\hat{y}_i)_i\:.
\]
To obtain (\ref{eq-toprove}), it is enough to check that
$\tilde y_0=\hat{y}_0$. This is done by systematic 
inspection in Table \ref{tab:commute}. Each one of the 32 cases mimicks the
commutative diagram: in the first line, from left to right,
$(x_{-2},x_{-1},x_0)$, $(u_{-1},u_0)$, and $(\tilde x_{-1},\tilde x_0)$; in the second line,
from left to right, $(y_{-1},y_0)$, $(u_{-1},u_0)$, and
$\tilde y_0=\hat{y}_0$. 

\begin{table}[hbt]
\centering
\begin{tabular}{|c|c|c||c|c|c|}
\hline 
111 or 000&$\rightarrow\rightarrow$&11 or 00&101 or 010&$\rightarrow\rightarrow$&10 or 01
\\
\hline 
$\bullet\bullet$&$\rightarrow\rightarrow$&$\bullet$&$\circ\circ$&$\rightarrow\rightarrow$&$\circ$
\\
\hline 
\hline 
111 or 000&$\rightarrow\uparrow$&10 or 01&101 or 010&$\rightarrow\uparrow$&10 or 01
\\
\hline 
$\bullet\bullet$&$\rightarrow\uparrow$&$\circ$&$\circ\circ$&$\rightarrow\uparrow$&$\circ$
\\
\hline 
\hline 
111 or 000&$\uparrow\rightarrow$&01 or 10&101 or 010&$\uparrow\rightarrow$&10 or 01
\\
\hline 
$\bullet\bullet$&$\uparrow\rightarrow$&$\circ$&$\circ\circ$&$\uparrow\rightarrow$&$\circ$
\\
\hline 
\hline 
111 or 000&$\uparrow \uparrow $&00 or 11&101 or 010&$\uparrow \uparrow $&10 or 01
\\
\hline 
$\bullet\bullet$&$\uparrow \uparrow $&$\bullet $&$\circ\circ$&$\uparrow \uparrow$&$\circ$
\\
\hline 
\hline 
110 or 001&$\rightarrow\rightarrow$&11 or 00&100 or 011&$\rightarrow\rightarrow$&10 or 01
\\
\hline 
$\bullet\circ$&$\rightarrow\rightarrow$&$\bullet$&$\circ\bullet$&$\rightarrow\rightarrow$&$\circ$
\\
\hline 
\hline 
110 or 001&$\rightarrow\uparrow$&11 or 00&100 or 011&$\rightarrow\uparrow$&11 or 00
\\
\hline 
$\bullet\circ$&$\rightarrow\uparrow$&$\bullet$&$\circ\bullet$&$\rightarrow\uparrow$&$\bullet$
\\
\hline 
\hline 
110 or 001&$\uparrow\rightarrow$&01 or 10&100 or 011&$\uparrow\rightarrow$&10 or 01
\\
\hline 
$\bullet\circ$&$\uparrow\rightarrow$&$\circ$&$\circ\bullet$&$\uparrow\rightarrow$&$\circ$
\\
\hline 
\hline 
110 or 001&$\uparrow \uparrow $&01 or 10&100 or 011&$\uparrow \uparrow $&11 or 00
\\
\hline 
$\bullet\circ$&$\uparrow \uparrow $&$\circ $&$\circ\bullet$&$\uparrow \uparrow$&$\bullet$
\\
\hline 
\end{tabular}
\vspace{0,2cm}
\caption{The 32 possible cases.}\label{tab:commute}
\end{table}
If the process $X$ is defined by \eqref{rec-A}, relation \eqref{eq-toprove} entails that the process $Y$, defined by $Y_{n}=\varphi(X_{n})$, satisfies relation \eqref{rec-B}.
\end{proof}


\begin{lemm}\label{le-domin}
Model $B$ is dominated by model $C$: for $x,u
\in \{\circ, \bullet\}^{\Z} \times \cU^{\Z}$,
\[
 \cB(x,u) \leq \cC(x,u)\:,
\]
where $\leq$ is the coordinate-wise product ordering on $\{\circ,\bullet\}^{\Z}$,
with $\circ \leq \bullet$. 
\end{lemm}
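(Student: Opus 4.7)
The plan is to verify the pointwise inequality $\cB(x,u)_i \leq \cC(x,u)_i$ for every site $i$ and every input $(x,u) \in \{\circ,\bullet\}^\Z \times \cU^\Z$. Since the alphabet has only two elements with $\circ \leq \bullet$, it suffices to prove the implication $\cB(x,u)_i = \bullet \Rightarrow \cC(x,u)_i = \bullet$. Moreover, the value of either map at site $i$ depends only on the four symbols $(x_{i-1}, x_i, u_{i-1}, u_i)$, so I only need to compare the two finite lists of local patterns that produce a $\bullet$.

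Placing the two definitions side by side, three of the four patterns yielding $\tilde y_i = \bullet$ in model $B$---namely $(\bullet\circ, \rightarrow\cU)$, $(\circ\bullet, \cU\uparrow)$, and $(\bullet\bullet, \uparrow\uparrow)$---appear verbatim in the definition of $\cC$. The remaining pattern $(\bullet\bullet, \rightarrow\rightarrow)$ is a subcase of $(\bullet\bullet, \rightarrow\cU)$, which is listed for $\cC$. Thus the set of local patterns producing $\bullet$ under $\cB$ is contained in the corresponding set for $\cC$, and the lemma follows by inspection at each coordinate.

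There is essentially no obstacle here; the statement reduces to a direct set-inclusion check. The conceptual content is simply that the two rules differ in exactly one local scenario: when two particles at sites $i-1$ and $i$ attempt to land simultaneously at site $i$ via $u_{i-1}u_i = \rightarrow\uparrow$, the annihilating rule $\cB$ returns $\circ$ while the coalescing rule $\cC$ returns $\bullet$. In every other configuration the two rules coincide, and this single asymmetry is precisely what makes $\cC$ dominate $\cB$ in the coordinate-wise order.
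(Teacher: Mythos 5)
Your proof is correct and matches the paper's approach: the paper simply states that the domination "can be checked directly on the definitions of $\cB$ and $\cC$," which is exactly the pattern-by-pattern inclusion you carry out. Your explicit identification of the single differing local scenario $(\bullet\bullet,\rightarrow\uparrow)$ is a valid and slightly more detailed rendering of the same check.
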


\begin{proof}
This can be checked directly on the definitions of $\cB$ and
$\cC$. Intuitively, particles are merged in model $C$, and annihilate
in model $B$. 
\end{proof}

\begin{lemm}
The following implications hold:
\begin{align*}
& [C \mbox{ is ergodic with invariant measure } \delta_{\circ^\Z}] \\
& \qquad \implies [B \mbox{ is ergodic with invariant measure }
  \delta_{\circ^\Z}] \\
& \qquad \qquad \iff [A \mbox{ is non-ergodic with invariant measure }
  (\delta_{(01)^{\Z}} +
\delta_{(10)^{\Z}})/2] \:.
\end{align*}
\end{lemm}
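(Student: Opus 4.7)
The plan is to establish three implications. The first, $C$ ergodic $\Rightarrow$ $B$ ergodic, I would prove via the coupling of Lemma \ref{le-domin}: running $B$ and $C$ from the same initial condition $Y_0 = Z_0$ with the same update process $U$, the pointwise inequality $\cB(y,u) \le \cC(y,u)$ together with monotonicity of $\cC$ in its first argument (a finite case-check of its transition rule) propagates to $Y_n \le Z_n$ almost surely for every $n$. Since $\{Y_{i,n} = \bullet\}\subseteq\{Z_{i,n}=\bullet\}$, convergence $Z_n \to \delta_{\circ^\Z}$ forces $\pr{Y_{i,n}=\bullet}\to 0$, and hence $Y_n \to \delta_{\circ^\Z}$ weakly from every $Y_0$. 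Combined with the trivial $F_B$-invariance of $\delta_{\circ^\Z}$, this gives both uniqueness and attractivity.

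For the forward direction of the equivalence, $B$ ergodic $\Rightarrow$ $A$ as stated, I would push $F_A$-invariant measures through $\varphi$ using Lemma \ref{le-commut}. If $\nu F_A = \nu$, then $\varphi(\nu)$ is $F_B$-invariant and hence equals $\delta_{\circ^\Z}$; the fibre is $\varphi^{-1}(\circ^\Z) = \{(01)^\Z, (10)^\Z\}$, so $\nu = p\,\delta_{(01)^\Z}+(1-p)\,\delta_{(10)^\Z}$. Since $F_A$ acts as the shift on alternating configurations and therefore swaps the two Diracs, invariance forces $p=1/2$, so $\nu = \mu$. Non-ergodicity is then immediate from the period-two orbit of $\delta_{(01)^\Z}$ already noted after Theorem \ref{th-main}.

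For the reverse direction, $A$ as stated $\Rightarrow$ $B$ ergodic, uniqueness of the $F_B$-invariant measure follows from a Ces\`aro lift: given $\eta F_B=\eta$, pick any $\nu$ with $\varphi(\nu)=\eta$, use compactness and uniqueness of $\mu$ to get $\frac{1}{N}\sum_{n<N}\nu F_A^n\to\mu$ weakly, then apply the continuous map $\varphi$ together with invariance of $\eta$ to conclude $\eta=\varphi(\mu)=\delta_{\circ^\Z}$. For attractivity, I would use the \emph{symmetric lift}: given $\eta_0$, let $\nu_0$ be obtained by sampling $Y\sim\eta_0$ and then the uniform element of the two-point fibre $\varphi^{-1}(Y)$, so that $\nu_0$ is invariant under complementation $x\mapsto\bar x$; this symmetry is preserved by $F_A$ because $\cA(\bar x,u)=\overline{\cA(x,u)}$, and on the complement-symmetric subclass $\varphi$ becomes a bijection intertwining $F_A$ and $F_B$. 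The main obstacle will be to upgrade the Ces\`aro convergence $\frac{1}{N}\sum_n\nu_0 F_A^n\to\mu$ (free from uniqueness of $\mu$) to genuine weak convergence; for translation-invariant $\eta_0$ one can convert Ces\`aro convergence of the particle density to actual convergence using that the density in $B$ is non-increasing (annihilations never create particles), but extending this to arbitrary $\eta_0$ requires an additional ingredient, and this is the step I expect to be hardest.
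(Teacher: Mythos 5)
Your treatment of the two implications that the paper actually uses is correct and fills in what the paper's one-line proof (``a direct consequence of Lemmas \ref{le-commut} and \ref{le-domin}'') leaves implicit. For $C\Rightarrow B$ you rightly observe that the single-step domination of Lemma \ref{le-domin} does not propagate on its own: one needs $Y_n\le Z_n\Rightarrow \cB(Y_n,U_n)\le\cC(Y_n,U_n)\le\cC(Z_n,U_n)$, and the second inequality is the monotonicity of $\cC$, which the paper only states later as Lemma \ref{le-monot}; your proof correctly imports it. For $B\Rightarrow A$, pushing invariant measures through $\varphi$ via Lemma \ref{le-commut}, identifying the fibre $\varphi^{-1}(\circ^\Z)=\{(01)^\Z,(10)^\Z\}$, and using that $F_A$ swaps the two Diracs to force $p=1/2$ is exactly the intended argument, and the period-two orbit of $\delta_{(01)^\Z}$ gives non-ergodicity.

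The gap you flag in the reverse direction of the equivalence is genuine. Uniqueness of the $F_B$-invariant measure does follow from your Ces\`aro lift, but attractivity does not: model $B$ is not monotone (adding a particle can trigger an annihilation that removes another), so you cannot dominate an arbitrary initial condition by $\bullet^\Z$ within model $B$ itself, and your monotone-density upgrade of Ces\`aro convergence only covers translation-invariant $\eta_0$. For general $\eta_0$, uniqueness of $A$'s invariant measure yields Ces\`aro convergence of the site-occupation probabilities to $0$, hence only $\liminf=0$ for a nonnegative, non-monotone sequence. So the implication from the statement about $A$ to ergodicity of $B$ is not established by these soft arguments. Note, however, that the paper's own proof does not supply this direction either, and it is never used: only the forward chain $C\Rightarrow B\Rightarrow A$ feeds into Theorem \ref{th-main}, and that chain you have proved completely.
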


\begin{proof}
This is a direct consequence of Lemmas \ref{le-commut} and
\ref{le-domin}. 
\end{proof}

Therefore, in order to prove Theorem \ref{th-main}, it is sufficient to prove that model
$C$ is ergodic with invariant measure $\delta_{\circ^\Z}$. This is the
purpose of next section. 

\section{\bf{Model $C$ is ergodic}} 
\label{monotone}


\begin{lemm}\label{le-monot}
Model $C$ is monotone, that is: for $z\in \{\circ, \bullet\}^{\Z}$,
$\tilde z\in \{\circ, \bullet\}^{\Z}$, $u\in \cU^{\Z}$, 
\[
z \leq \tilde z 
\ \implies \ \cC(z,u) \leq \cC(\tilde z,u)\:,
\]
where $\leq$ is the coordinate-wise product ordering. 
\end{lemm}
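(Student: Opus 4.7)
The plan is to rewrite the transition rule $\cC$ in a manifestly monotone form, after which the lemma follows by inspection.

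First I would observe that the four cases listed in the definition of $\cC$ --- namely $(\bullet\circ,\rightarrow\cU)$, $(\circ\bullet,\cU\uparrow)$, $(\bullet\bullet,\uparrow\uparrow)$ and $(\bullet\bullet,\rightarrow\cU)$ --- are exactly the pairs $(z_{i-1}z_i, u_{i-1}u_i)$ for which at least one of the following two events occurs: either a particle arrives from site $i-1$ ($z_{i-1}=\bullet$ and $u_{i-1}=\rightarrow$), or the particle at $i$ stays in place ($z_i=\bullet$ and $u_i=\uparrow$). Thus one obtains the equivalent characterization
\[
\cC(z,u)_i = \bullet \iff \bigl(z_{i-1}=\bullet \text{ and } u_{i-1}=\rightarrow\bigr)\ \text{ or }\ \bigl(z_i=\bullet \text{ and } u_i=\uparrow\bigr)\:.
\]
Checking this amounts to a finite enumeration over the sixteen possible values of $(z_{i-1}z_i, u_{i-1}u_i)$; the key point is that under the merging convention of model $C$, a single particle contribution from the left \emph{or} from the site itself is enough to produce a $\bullet$ at $i$.

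Second, with $\cC$ expressed as a disjunction of two conditions each monotone in $z$, the lemma is immediate. Suppose $z \leq \tilde z$ and $\cC(z,u)_i = \bullet$. Then one of the two disjuncts above holds for $z$; since $z_j=\bullet$ implies $\tilde z_j=\bullet$ for every $j$ and $u$ is unchanged, the same disjunct also holds for $\tilde z$, hence $\cC(\tilde z,u)_i=\bullet$. As $i$ is arbitrary, $\cC(z,u)\leq \cC(\tilde z,u)$.

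There is essentially no obstacle here: the only content is the monotone rewriting of the rule, and this is a short case check. It is worth noting that the analogous rewriting fails for model $B$: the pair $(\bullet\bullet,\rightarrow\uparrow)$ produces $\circ$ (annihilation) while $(\bullet\circ,\rightarrow\uparrow)$ produces $\bullet$, so decreasing $z$ from $\bullet\bullet$ to $\bullet\circ$ at sites $i-1,i$ would \emph{increase} $\cB(z,u)_i$. This is precisely why model $B$ is not monotone, and why the domination in Lemma \ref{le-domin} is needed to transfer ergodicity from $C$ to $B$ rather than a direct monotonicity argument.
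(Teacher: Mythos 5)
Your proof is correct and takes essentially the same route as the paper, whose entire proof is the statement that monotonicity ``can be checked directly on the definition of $\cC$''. Your explicit rewriting $\cC(z,u)_i=\bullet \iff (z_{i-1}=\bullet \mbox{ and } u_{i-1}=\rightarrow) \mbox{ or } (z_i=\bullet \mbox{ and } u_i=\uparrow)$ is a correct and clean way to organize that direct check (all sixteen cases verify), and your closing remark explaining why the analogous argument fails for model $B$ is also accurate.
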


\begin{proof}
It can be checked directly on the definition of  $\cC$. 
\end{proof}

With this monotonicity, to get the ergodicity, it is enough to prove that
$\delta_{\bullet^\Z} F_C^n \rightarrow \delta_{\circ^\Z}$.  Indeed, consider two realizations of model $C$, one, say $Z=(Z_n)_n$, that starts with all sites occupied, the other, say $\tilde Z=(\tilde Z_n)_n$, that starts with an arbitrary initial condition, their evolution using the same update process $U$. According to Lemma \ref{le-monot}, at any time
$n\in\N$,  $\tilde Z_n\le Z_n$.  

\medskip

From now on, we focus on the process $Z$. Recall that for each $n$,
$Z_n = (Z_{k,n})_{k\in \Z}$ is the state of the system at time $n$. 
The process $Z_{n}$
is stationary, i.e. invariant by translation, since $Z_{0}$, $U$, and
$\cC$ are invariant too.  Define
\begin{equation}\label{eq-density}
d_n = \mathbb P \left(Z_{k,n}=\bullet\right) = \mathbb
P\left(Z_{0,n}=\bullet\right) \:.
\end{equation}
This is the density of particles at time $n$.  The density $d_n$ can also be viewed as an evaluation of the distance between $Z_{n}$ and $\delta_{\circ^\Z}$. Indeed, for any finite subset $E$
of $\Z$, consider the Hamming distance on $\{\circ,\bullet\}^E$, and
denote by $\mathcal W_H$ the corresponding Wasserstein distance on $\mathcal M (\{\circ,\bullet\}^E)$. Setting $Z_{E,n}
=(Z_{k,n})_{k\in E}$, we have:
$\mathcal W_H\left(Z_{E,n}, \delta_{\circ^E}\right)\ =\ |E|\,d_n$. 

\begin{theo}\label{th-density}
Let $T$ be the time that a simple symmetric random walk on $\Z$ needs
to reach 2, starting from 0. We have
\begin{align}
d_n & \ = \  \pr{ T > 2n } \label{eq-density1} \\
    & \ = \ 4^{-n}\,{2n+1\choose n}  \label{eq-density2} \:.
\end{align}
In particular, $d_n \sim 2/\sqrt{\pi\,n}$, hence converges to 0 as $n$ grows. 
\end{theo}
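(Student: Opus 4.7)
The strategy is to trace each initial particle as a random walk and exploit translation invariance. For each $j\in\Z$, let $W^{(j)}_n$ denote the position at time $n$ of the particle initially at site $j$, with coalesced particles sharing a single trajectory. Using that each particle $P_j$ reads the update $U_{W^{(j)}_m,m}$ at its current site, one checks that $W^{(j)}$ is a $\ac{0,+1}$-valued random walk with each step taken with probability $1/2$; that $W^{(j)}_n\le W^{(j+1)}_n$ for all $j,n$; that before coalescence $W^{(j)}$ and $W^{(j+1)}$ use updates at distinct sites, hence evolve independently; and that the set of occupied sites at time $n$ is exactly $\ac{W^{(j)}_n:j\in\Z}$. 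A case-by-case inspection shows that this trajectory lifting matches the rule defining $\cC$.

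The first key step is the identity
\[ d_n \ = \ \pr{W^{(0)}_n < W^{(1)}_n}. \]
To prove it, decompose
\[ \mathbf 1[Z_{0,n}=\bullet]=\sum_{j\in\Z}\mathbf 1[W^{(j)}_n=0,\ W^{(j-1)}_n<0] \]
(singling out the smallest $j$ for which $P_j$ sits at site $0$ at time $n$), take expectations, and rewrite each term via translation invariance of the joint law of $(W^{(j)}_n)_{j\in\Z}$ as $\pr{W^{(0)}_n=-j,\ W^{(-1)}_n<-j}$. The sum over $j\in\Z$ telescopes to $\pr{W^{(-1)}_n<W^{(0)}_n}$, which by one further shift equals $\pr{W^{(0)}_n<W^{(1)}_n}$, the event that $P_0$ and $P_1$ have not yet coalesced by time $n$.

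The rest is a random walk computation. Set $G_n=W^{(1)}_n-W^{(0)}_n$: starting from $G_0=1$, it performs a lazy symmetric random walk with steps $-1,0,+1$ of probabilities $1/4,1/2,1/4$ as long as $G_n>0$, and is absorbed at $0$. Realize this lazy walk as $L_m=1+S_{2m}/2$ for a simple symmetric walk $(S_k)_{k\ge 0}$ started at $0$. The event $\ac{L_m>0\ \forall\,m\le n}$ becomes $\ac{S_{2m}\ge 0\ \forall\,m\le n}$, and a one-line parity argument (each odd-indexed $S_{2m+1}$ lies in $\ac{S_{2m}\pm 1}$, so is $\ge -1$ as soon as the even-indexed values are $\ge 0$) upgrades this to $\ac{S_k\ge -1\ \forall\, k\le 2n}$. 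By the symmetry $S\leftrightarrow -S$, this probability is $\pr{T>2n}$, yielding \eqref{eq-density1}. Formula \eqref{eq-density2} follows from the reflection principle, which gives $\pr{T>2n}=\bigl(\binom{2n}{n}+\binom{2n}{n-1}\bigr)/4^n$, simplified by Pascal to $\binom{2n+1}{n}/4^n$; Stirling then gives $d_n\sim 2/\sqrt{\pi n}$.

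The main obstacle, I expect, is the identity $d_n=\pr{W^{(0)}_n<W^{(1)}_n}$. One needs both a careful verification that the trajectory coupling really produces the site-rule dynamics of $\cC$ (a finite case check that must nevertheless be done honestly), and a clean translation-invariance argument justifying the rearrangement of an a priori infinite sum of probabilities. Everything downstream reduces to well-known random walk identities and a reflection argument.
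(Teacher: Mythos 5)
Your proof is correct, but it reaches \eqref{eq-density1} by a genuinely different route than the paper. The paper works \emph{backward} in time: it extends the dynamics by coupling from the past, and for the particle sitting at the origin at time $0$ it tracks the set $I_n$ of initial particles (at time $-n$) that merge into it; this set is an integer interval whose length is a lazy random walk with step law $\frac14\delta_{-1}+\frac12\delta_0+\frac14\delta_{1}$, started at $1$ and absorbed at $0$, whence $d_n=\pr{|I_n|\ge 1}$. You instead work \emph{forward} in time, lifting the site dynamics to labelled trajectories $W^{(j)}$ and proving the classical coalescing-random-walk identity $d_n=\pr{W^{(0)}_n<W^{(1)}_n}$ via the shift-and-sum argument; your rearrangement is legitimate because at most one term of the indicator sum is nonzero (indeed only $j\in[\![-n,0]\!]$ can contribute), so there is no integrability issue. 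The two key identities are dual to one another and, pleasingly, both reduce to the survival probability of the \emph{same} absorbed lazy walk, after which the parity argument, the reflection principle, and Stirling coincide with the paper's computation (the paper phrases the reflection step as $\pr{M_{2n}\le 1}=\pr{S_{2n}\in\ac{0,2}}$, which matches your $\bigl(\binom{2n}{n}+\binom{2n}{n-1}\bigr)4^{-n}$). What each approach buys: yours is the standard duality computation from the IPS literature (Arratia, Griffeath) and avoids the coupling-from-the-past construction, at the price of the case-check that the trajectory lifting realizes $\cC$, the monotonicity $W^{(j)}\le W^{(j+1)}$, and the conditional-independence argument for the two tagged walkers before they meet; the paper's ancestry argument avoids the infinite-sum manipulation and the tagged-particle independence entirely, needing only that $I_n$ remains an interval and that $|I_{n+1}|-|I_n|$ has conditional law $\rho$. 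Both are complete proofs of the theorem.
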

In continuous time, when the particles perform a simple symmetric
random walk, Bramson $\&$ Griffeath \cite{BrGr} obtain the same
asymptotic behavior for $d_n$, up to a scaling factor, as expected. 

\begin{coro}\label{co-C}
Model $C$ is ergodic with unique invariant measure
$\delta_{\circ^\Z}$.
\end{coro}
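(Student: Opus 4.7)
The corollary is essentially a routine consequence of Theorem \ref{th-density} combined with the monotonicity established in Lemma \ref{le-monot}; the real content is already contained in the asymptotics $d_n\to 0$. My plan is to deduce both invariance of $\delta_{\circ^\Z}$ and weak convergence $\nu F_C^n \stackrel{w}{\longrightarrow} \delta_{\circ^\Z}$ for every $\nu\in \cM(\{\circ,\bullet\}^\Z)$, which together give ergodicity and uniqueness.

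First I would observe that $\delta_{\circ^\Z}$ is trivially invariant: inspecting the definition of $\cC$, if $(z_i)_i=\circ^\Z$ then $\cC(z,u)=\circ^\Z$ for every $u$, so $\delta_{\circ^\Z} F_C=\delta_{\circ^\Z}$. Next, fix an arbitrary initial law $\nu$, and let $\tilde Z_0\sim\nu$ be independent of the update process $U$. Let $Z_0=\bullet^\Z$ (the full configuration), and define $Z_n$ and $\tilde Z_n$ by the common recursion $Z_{n+1}=\cC(Z_n,U_n)$, $\tilde Z_{n+1}=\cC(\tilde Z_n,U_n)$, using the \emph{same} update process $U$. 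Since $\tilde Z_0\leq Z_0=\bullet^\Z$ coordinatewise, Lemma \ref{le-monot} applied inductively gives $\tilde Z_n\leq Z_n$ for every $n$.

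The second step is to translate this domination into weak convergence of cylinder probabilities. Fix a finite set $K\subset\Z$ and $y\in\{\circ,\bullet\}^K$. If some coordinate of $y$ equals $\bullet$, say $y_j=\bullet$, then by the coupling and by stationarity under translations (as noted in the text, $Z_n$ is translation-invariant),
\begin{equation*}
\pr{\tilde Z_{K,n}=y}\ \leq\ \pr{\tilde Z_{j,n}=\bullet}\ \leq\ \pr{Z_{j,n}=\bullet}\ =\ d_n,
\end{equation*}
which tends to $0$ by Theorem \ref{th-density}. If instead $y=\circ^K$, a union bound gives
\begin{equation*}
\pr{\tilde Z_{K,n}=\circ^K}\ \geq\ 1-\sum_{k\in K}\pr{\tilde Z_{k,n}=\bullet}\ \geq\ 1-|K|\,d_n\ \longrightarrow\ 1.
\end{equation*}
Thus $\nu F_C^n(*y*)$ converges to $\delta_{\circ^\Z}(*y*)$ for every cylinder, which is exactly weak convergence $\nu F_C^n\stackrel{w}{\longrightarrow}\delta_{\circ^\Z}$.

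Finally, uniqueness follows: if $\mu$ is any invariant measure, then $\mu=\mu F_C^n\stackrel{w}{\longrightarrow}\delta_{\circ^\Z}$, so $\mu=\delta_{\circ^\Z}$. Combined with attractiveness, this gives ergodicity. I do not expect any genuine obstacle here — the only thing to be slightly careful about is that weak convergence on $\{\circ,\bullet\}^\Z$ (with the product topology) is equivalent to convergence of all cylinder probabilities, which is standard because cylinder indicators generate a convergence-determining class for Borel probability measures on this compact metrizable space.
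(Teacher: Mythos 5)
Your proposal is correct and follows essentially the same route as the paper: monotonicity of $\cC$ (Lemma \ref{le-monot}) gives a coupling $\tilde Z_n\le Z_n$ with the all-$\bullet$ start, and $d_n\to 0$ from Theorem \ref{th-density} then forces convergence of all cylinder probabilities to those of $\delta_{\circ^\Z}$, hence attractiveness and uniqueness. The paper leaves the cylinder-probability bookkeeping implicit (stating only that monotonicity reduces ergodicity to $\delta_{\bullet^\Z}F_C^n\to\delta_{\circ^\Z}$), so your write-up simply makes those standard steps explicit.
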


\begin{proof}
We first prove (\ref{eq-density2}), assuming
(\ref{eq-density1}). 
Let $S=(S_{k})_{k\in \N}$ be a realization of the simple symmetric
random walk on $\Z$, starting from 0. Define $M_{k}= \max \{ S_i ,
0\leq i \leq k \}$, the maximum of the random walk at time $k$. Recall
that $T=\inf \{i\geq 0 \mid S_i =2 \}$. 
We have
\begin{align*}
\pr{T > 2n} & \ = \  \pr{M_{2n}\le 1} \ = \ 1-\pr{M_{2n}\ge 2} \\
& \ = \  1-\sum_{\ell\in\Z}\pr{M_{2n}\ge 2, \ S_{2n}=\ell} \\
&\ = \ 1-\pr{S_{2n}\ge 2}-\sum_{\ell\le1}\pr{M_{2n}\ge 2, \ S_{2n}=\ell}
\end{align*}
According to the reflection principle, for $\ell\le1$,
$\pr{M_{2n}\ge 2, \ S_{2n}=\ell}= \pr{S_{2n}=4-\ell}$. Therefore, 
\begin{align*}
\pr{T > 2n} & \ = \  1-\pr{S_{2n}\ge 2}-\pr{S_{2n}\ge 3} \\
& \ = \  1-\pr{S_{2n}\le -2}-\pr{S_{2n}\ge 3} \\
&\ = \ \pr{S_{2n}\in\ac{0,2}} \\
&\ = \ 4^{-n}\,\pa{{2n\choose n}+{2n\choose n+1}} \ = \ 4^{-n} {2n+1\choose
  n} \:.
\end{align*}
Using Stirling's formula, we get 
\begin{equation*}
4^{-n} {2n+1\choose
  n} \sim \frac{2}{\sqrt{\pi
    n}} \:.
\end{equation*}

\medskip

Now let us prove (\ref{eq-density1}). 
Recall that $Z=(Z_n)_{n\in \N}$ is a realization of model $C$ with
$Z_0=\bullet^{\Z}$. 
One can extend the definition of $Z$ via coupling from the past. 
Consider the i.i.d. r.v.'s $(U_{k,n})_{(k,n)\in\Z\times\Z}$ with
$\pr{U_{i,j}=\uparrow}=\pr{U_{i,j}=\rightarrow}=1/2$. For each $s\in
  \Z$, define $Z^{(s)}=(Z^{(s)}_n)_{n\geq s}$ by 
 \begin{align*}
 Z^{(s)}_s = \bullet^{\Z}, \qquad \forall n \geq s, \ Z^{(s)}_{n+1} = \cC ( Z^{(s)}_n,
 (U_{k,n})_{k\in \Z}) \:.
 \end{align*}
The starting time of  $Z^{(s)}$ is $s$, but, besides that, the
dynamic is the same as that of process $Z$. Observe that
$Z^{(0)}=Z$. More generally, $Z^{(s)}$ has the same
distribution as $\left(Z_{-s+n}\right)_{n\ge s}$. Thus, we have
\[
d_n = \pr{Z^{(-n)}_{0,0} = \bullet} \:.
\]
In Figure \ref{ancestry}, we have represented space-time diagrams for
the model. The point of coordinate $(k,n)$ corresponds to site $k$ at
step $n$. We have also represented the updating variables
with the following convention: at the point $(k,n)$, there is an
arrow pointing north if $ U_{k,n}=\uparrow$ and an arrow pointing
north-east if $ U_{k,n}=\rightarrow$. This allows to visualize the
evolution of particles in the processes $Z^{(s)}$. In Figures
\ref{ancestry}.b and
\ref{ancestry}.c, the processes $Z^{(-3)}$ and $Z^{(-5)}$ are
represented; the grey nodes are the ones whose color depend on
updating variables outside of the represented window. 
In Figure \ref{ancestry}.d, the particles
painted in orange (gray) are those that merged into the
particle present at time 0 and site 0. 

\medskip

Let $I_n$, $n\geq 0$, be the set of indices of particles present at time $-n$ in $Z^{(-n)}$
that merge into particle 0 at time 0. Either
$I_{n}=[\![a_{n},b_{n}]\!]$, $a_n \leq b_n$, in which case
$Z^{(-n)}_{0,0}=\bullet$, or $I_{n}=\varnothing$, in which case
$Z^{(-n)}_{0,0}=\circ$.
We focus on $|I_n|$. 
For instance, in Figure \ref{ancestry}.d, we have
$(|I_n|)_{n\in[\![0,6]\!]}=(1,2,1,2,3,4,3)$.
Observe that 
\begin{equation}\label{eq-two}
d_n = \pr{ Z^{(-n)}_{0,0} = \bullet} = \pr{|I_n| \geq
  1} \:.
\end{equation}

\begin{figure}[Htb]
\begin{center}
\includegraphics[height=4.5cm]{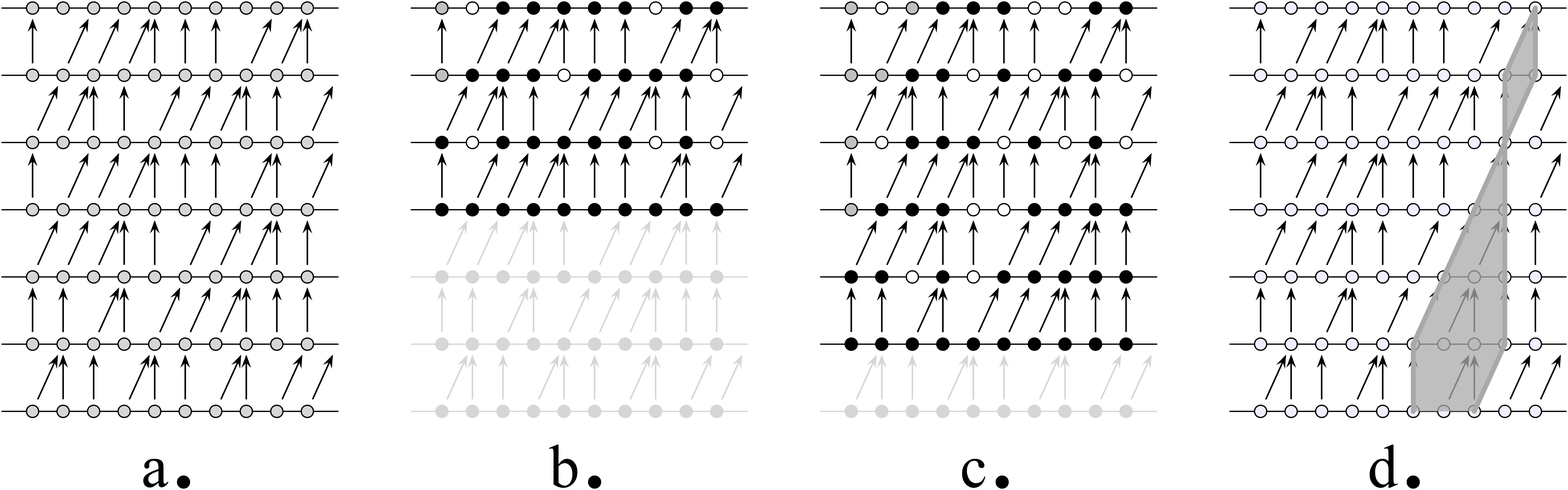}
\caption{\textbf{a.} The updating
  $(U_{k,n})_{(k,n)\in[\![-9,0]\!]\times[\![-6,0]\!]}$. \textbf{b.}
  The process $Z^{(-3)}$ during the span $[\![-3,0]\!]$. \textbf{c.}
  The process $Z^{(-5)}$ during the span $[\![-5,0]\!]$. \textbf{d.}
  The merging of particles.}\label{ancestry} 
\end{center}
\end{figure}

We have $a_{0}=b_{0}=0$, and $|I_0|=1$. Define 
\[\rho=\frac{1}{4}\delta_{-1}+\frac{1}{2}\delta_{0}+\frac{1}{4}\delta_{1}
\:.\]
We check the following:
\begin{itemize}
  \item Assume that $a_{n}<b_{n}$. Then, 
\[
a_{n+1} = \begin{cases} a_{n}-1 & \mbox{if } U_{a_n-1,-n-1}=
                       \rightarrow \\
a_n & \mbox{if } U_{a_n-1,-n-1}= \uparrow 
\end{cases}, \qquad b_{n+1} = \begin{cases} b_{n}-1 & \mbox{if }
  U_{b_n,-n-1}= \rightarrow \\
b_n & \mbox{if }
  U_{b_n,-n-1}= \uparrow 
\end{cases}\:.
\]
Thus, $|I_{n+1}| -|I_n| \in\ac{0,\pm
  1}$, and the conditional law of
$|I_{n+1}| -|I_n|$ is $\rho$. 
  \item Assume that  $a_{n}=b_{n}$. 
Then
\[
I_{n+1} = \begin{cases} \varnothing  & \mbox{if } (U_{a_n-1,-n-1},
  U_{a_n,-n-1}) = (\uparrow,\rightarrow) \\
[\![a_{n}-1,a_{n}-1]\!] & \mbox{if } (U_{a_n-1,-n-1},
  U_{a_n,-n-1}) =  (\rightarrow, \rightarrow) \\
[\![a_{n}-1,a_{n}]\!] & \mbox{if } (U_{a_n-1,-n-1},
  U_{a_n,-n-1}) = (\rightarrow, \uparrow) \\
[\![a_{n},a_{n}]\!] & \mbox{if } (U_{a_n-1,-n-1},
  U_{a_n,-n-1}) =  (\uparrow, \uparrow) \:.
\end{cases}
\]
For instance, the third case appears between lines $-3$ and $-2$ in
Figure \ref{ancestry}.d. 
Here again, the conditional distribution of
$|I_{n+1}| -|I_n|$ is $\rho$. 
  \item If $I_{n}=\varnothing$, then $I_{n+1}=\varnothing$.
\end{itemize}

Consequently $(|I_n|)_{n\in
  \N}$ is a random walk with step $\rho$, starting from 1, and killed
when it reaches 0. Using (\ref{eq-two}), we obtain (\ref{eq-density1}).
\end{proof}

\section{\bf{Speed of convergence for models $A$ and $B$}}\label{se-speedAB}

Let $(A_n)_{n\in \N}$ be a realization of model $A$, with $A_0 \sim \mu$, $\mu
\in \cM(\{0, 1\}^{\Z})$. The possible limits for weakly-converging
subsequences of  $(A_n)_{n\in \N}$ are of the form $p\delta_{(01)^\Z}
+ (1-p)\delta_{(10)^\Z}$ for $p\in [0,1]$.   An evaluation of the distance to the limits is given
by
\[
\pr{ A_{0,n}A_{1,n} \in \{00,11\} } \:.
\]
Since model $A$ is not monotone,
we do not know for which
initial measure $\mu$ this distance will be maximized. Hence we
evaluate the ``speed of convergence'' for model $A$ by the quantity:
\[
d_n^{A} = \max_{\mu\in \cM(\{0, 1\}^{\Z})} \pr{ A_{0,n}A_{1,n} \in
 \{00,11\} } \:.
\]
 The quantity $d_n^{A}$ is also the speed of convergence to
$\delta_{\circ^{\Z}}$ for model $B$. 
Indeed we have $\pr{ A_{0,n}A_{1,n} \in \{00,11\} } = \pr{ \varphi(A_{n})_0 =
\bullet}$, which implies that
\[
d_n^{A} = \max_{\nu\in \cM(\{\circ, \bullet\}^{\Z})} \pr{B_{0,n} =
\bullet} \:,
\]
where $(B_n)_n$ denotes a realization of model $B$ and $\nu$ denotes
its initial distribution ($B_0\sim \nu$). 

\medskip

Recall that $d_n=4^{-n} {2n+1 \choose n}$ is the speed of
convergence for model $C$, see (\ref{eq-density}) and Theorem \ref{th-density}.

\begin{prop}\label{pr-coupling}
We have
\[
\frac{1}{2} d_{n-1} \leq d_n^{A} \leq d_n \:.
\]
\end{prop}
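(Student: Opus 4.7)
The upper bound $d_n^A \leq d_n$ follows directly from the ingredients already in place. Fix any initial measure $\nu$ for model $B$ and couple three processes driven by the same update sequence $U$: the process $B$ starting from $\nu$, an auxiliary copy of $C$ starting from $\nu$, and the process $Z$ of Section~\ref{monotone} starting from $\bullet^{\Z}$. Iterating the pointwise domination $\cB(\cdot,u) \leq \cC(\cdot,u)$ of Lemma~\ref{le-domin} together with the monotonicity of $\cC$ (Lemma~\ref{le-monot}) gives $B_n \leq C_n$, while a second use of monotonicity applied to the trivial inequality $\nu \leq \bullet^{\Z}$ gives $C_n \leq Z_n$. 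Hence $\pr{B_{0,n} = \bullet} \leq d_n$, and taking the supremum over $\nu$ yields the upper bound.

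For the lower bound $\frac{1}{2} d_{n-1} \leq d_n^A$, the plan is to specialize to $\nu = \delta_{\bullet^{\Z}}$ and establish the exact identity $\pr{B_{0,n}=\bullet \mid Y_0 = \bullet^{\Z}} = \frac{1}{2} d_{n-1}$. The first step is a direct inspection of $\cB$ applied to $\bullet^{\Z}$: one has $Y_{i,1} = \bullet$ iff $U_{i-1,0} = U_{i,0}$. Coding $\uparrow$ as $1$ and $\rightarrow$ as $0$ in $\mathbb{F}_2$, and $\bullet$ as $1$, $\circ$ as $0$, this reads $Y_{i,1} = 1 \oplus U_{i-1,0} \oplus U_{i,0}$, and a standard finite-window XOR computation shows that the sequence $(Y_{i,1})_{i \in \Z}$ is i.i.d.\ Bernoulli$(1/2)$. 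By the Markov property it then suffices to show that model $B$ started from an i.i.d.\ Bernoulli$(1/2)$ configuration $\xi = (\xi_i)_{i \in \Z}$ has density exactly $\frac{1}{2} d_{n-1}$ at time $n-1$.

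This last identity follows from a parity coupling with the process $Z$ of Section~\ref{monotone}, driven by the same updates. A four-case check on $(Y_{i-1,n}, Y_{i,n}) \in \{0,1\}^2$ gives the $\mathbb{F}_2$-linearity
\[
Y_{i,n+1} \;=\; (1 \oplus U_{i-1,n})\, Y_{i-1,n} \;\oplus\; U_{i,n}\, Y_{i,n} \pmod{2},
\]
and by induction on $n$ one obtains $Y_{i,n} = \bigoplus_{j \in A_i^{(n)}} \xi_j$, where $A_i^{(n)} \subseteq \Z$ is the set of initial sites whose (coalescing) trajectory in $Z$ reaches $(i,n)$; this set is nonempty iff $Z_{i,n} = \bullet$. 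The inductive step uses the four cases of $(U_{i-1,n}, U_{i,n})$ together with the key coalescing fact that ancestry sets of distinct sites at the same time are disjoint, so that sum and XOR of indicator functions coincide in the $(\rightarrow,\uparrow)$ merge case. Since $\xi$ is independent of $U$ and the $\xi_j$ are i.i.d.\ Bernoulli$(1/2)$, conditioning on $A_i^{(n)}$ gives $\pr{Y_{i,n} = \bullet} = \frac{1}{2} \pr{A_i^{(n)} \neq \varnothing} = \frac{1}{2} d_n$ by Theorem~\ref{th-density}; substituting $n-1$ for $n$ concludes.

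The main technical obstacle is the parity coupling itself: the $\mathbb{F}_2$-linearity of $\cB$ and the matching recursion for the ancestry sets $A_i^{(n)}$ amount to a finite case analysis in the spirit of Table~\ref{tab:commute}, but the bookkeeping of which of $A_{i-1}^{(n)}$ and $A_i^{(n)}$ contributes to $A_i^{(n+1)}$ in each of the four update cases, and in particular the disjointness used in the coalescence case, is the step where care is essential.
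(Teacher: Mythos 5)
Your proof is correct and is essentially the paper's argument in algebraic dress: the $\mathbb{F}_2$-linear/ancestry-set coupling you build is exactly the paper's model $D$, whose merge rule \eqref{eq-merge} is the XOR of colors ($b\mapsto 1$, $g\mapsto 0$), and both proofs obtain the lower bound by noting that one step from the full configuration yields an i.i.d.\ Bernoulli$(1/2)$ configuration, for which the density at time $n$ is $\tfrac12 d_n$. The upper bound via Lemmas \ref{le-domin} and \ref{le-monot} is likewise the content of Lemma \ref{le-coupling}.
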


Let $(A_n)_n$ be a realization of model $A$, with $A_0 \sim \mu$.
If $\mu$ is the uniform distribution on $\{0,1\}^{\Z}$,
i.e. the r.v.'s $A_{i,0}$ are i.i.d. with
$\pr{A_{0,0}=0}=\pr{A_{0,0}=1}=1/2$, then we shall see that $\pr{ A_{0,n}A_{1,n} \in
 \{00,11\} } = d_n/2$. If $\mu=\delta_{1^\Z}$, then we have
$\pr{ A_{0,n}A_{1,n} \in
 \{00,11\} } = d_{n-1}/2$, which is larger than $d_n/2$. The results can be translated to model $B$: the density of particles
at step $n$ is $d_n/2$ if the initial distribution is uniform, and it
is $d_{n-1}/2$ if the initial distribution is
$\delta_{\bullet^{\Z}}$.

\medskip

The end of the section is devoted to the proof of
Prop. \ref{pr-coupling}, through the study of model $A$ with the two initial distribution mentioned previously. 

\medskip

We define a new PCA, called model $D$, which is a coupling of models $B$ and $C$.
The alphabet is $\{\circ, b, g\}$ and the set of sites is $\Z$. Given
a configuration $u\in \{\circ, b, g\}^{\Z}$, the interpretation is as
follows: if $u_i=\circ$ then site $i$ is empty; if $u_i = b$ then site
$i$ contains a {\em blue} particle; if $u_i = g$ then site
$i$ contains a {\em green} particle. Particles move as in models $B$
and $C$. When two particles collide,
they get merged into one particle as in model $C$. In absence of
collision, particles keep their color. In case of a collision, 
the merged particle is colored according to the rules:
\begin{equation}\label{eq-merge}
b+b \rightarrow g, \quad g+g \rightarrow g, \quad b+g \rightarrow b,
\quad g+b \rightarrow b \:.
\end{equation}
We have represented a realization of model $D$ on Figure
\ref{colors}. The ``question mark'' nodes are the ones whose color depend on 
updating variables outside of the represented window. 
Define
\begin{align*}
\pi_B:&  \{\circ, b,g\} \longrightarrow \{\circ,\bullet\}, & 
 \pi_B(\circ)=\circ, \ \pi_B(b)=\bullet, \pi_B(g)=\circ \:, \\
\pi_C:&  \{\circ, b,g\} \longrightarrow \{\circ,\bullet\}, & 
 \pi_C(\circ)=\circ, \ \pi_C(b)=\bullet, \pi_C(g)=\bullet \:.
\end{align*}
We keep the same notations for the product applications: $\pi_B:
\{\circ, b,g\}^{\Z} \rightarrow \{\circ,\bullet\}^{\Z}, (u_i)_i
 \mapsto (\pi_B(u_i))_i$, and  $\pi_C:
\{\circ, b,g\}^{\Z} \rightarrow \{\circ,\bullet\}^{\Z}, (u_i)_i
 \mapsto (\pi_C(u_i))_i$.

\begin{lemm}\label{le-coupling}
If $(D_n)_n$ is a realization of model $D$, then $(\pi_B(D_n))_n$ is a
realization of model $B$, and $(\pi_C(D_n))_n$ is a
realization of model $C$. As consequences, model $D$ is ergodic with
unique invariant measure $\delta_{\circ^{\Z}}$, and $d_n^{A} \leq d_n$.
\end{lemm}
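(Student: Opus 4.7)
The plan is to reduce the whole statement to two intertwining identities between the deterministic update maps. Let $\mathcal{D}:\{\circ,b,g\}^\Z\times\cU^\Z\to\{\circ,b,g\}^\Z$ denote the one-step map defining model $D$, so that $D_{n+1}=\mathcal{D}(D_n,U_n)$. The heart of the lemma is to establish
\[
\pi_B\circ \mathcal{D} \ = \ \cB\circ(\pi_B,\mathrm{Id}), \qquad \pi_C\circ \mathcal{D} \ = \ \cC\circ(\pi_C,\mathrm{Id}).
\]
Granting these, applying $\pi_B$ (respectively $\pi_C$) to the recursion $D_{n+1}=\mathcal{D}(D_n,U_n)$ and using that $U$ is independent of $\pi_B(D_0)$ (respectively $\pi_C(D_0)$) yields exactly the defining recursion \eqref{rec-B} for model $B$ (respectively the analogous recursion for model $C$), proving the two assertions about projections.

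To establish the identities, I would proceed by a local case analysis on the pairs $(z_{i-1}z_i,u_{i-1}u_i)$, analogous in spirit to Table \ref{tab:commute}. The value of $\mathcal{D}(z,u)$ at site $i$ is governed by the number of incoming particles: one from the left if $z_{i-1}\neq\circ$ and $u_{i-1}=\rightarrow$, one from the present site if $z_i\neq\circ$ and $u_i=\uparrow$. For $\pi_C$, since $\pi_C$ collapses both colors to $\bullet$, the arrival pattern is a function of $\pi_C(z)$ alone, and the target is non-empty iff at least one arrival occurs, matching $\cC$ exactly. For $\pi_B$, the coloring rules \eqref{eq-merge} are designed precisely so that in the two-arrival case the result is $\pi_B$-non-empty iff exactly one of the two incoming particles was $\pi_B$-non-empty: indeed $b+b\to g$ gives $\circ$ (annihilation of two $\pi_B$-particles), $g+g\to g$ gives $\circ$ (no $\pi_B$-particle), and $b+g,g+b\to b$ give $\bullet$ (one $\pi_B$-particle). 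Combined with the trivial zero- and one-arrival cases, this is the rule of $\cB$.

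Ergodicity of $D$ is then an immediate corollary. Since $\pi_C^{-1}(\circ)=\{\circ\}$, we have $\pr{D_{k,n}=\circ}=\pr{\pi_C(D_n)_k=\circ}$ for each $k$, and by Corollary \ref{co-C} applied to the model-$C$ process $(\pi_C(D_n))_n$, this probability tends to $1$ as $n\to\infty$. Hence the law of $D_n$ converges weakly to $\delta_{\circ^\Z}$ whatever the initial law of $D_0$, and $\delta_{\circ^\Z}$, being $\mathcal{D}$-invariant, is the unique invariant measure.

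To prove $d_n^A\le d_n$, I would couple models $B$ and $C$ through $D$. Given $\nu\in\cM(\{\circ,\bullet\}^\Z)$, lift it to a law $\tilde\nu$ on $\{\circ,b,g\}^\Z$ by replacing every $\bullet$ by $b$, so that the image of $\tilde\nu$ under $\pi_B$ is $\nu$. Let $D_0\sim\tilde\nu$ be independent of an update process $U$, and let $Z$ denote the model-$C$ realization started from $\bullet^\Z$ and driven by the same $U$. By the first part, $(\pi_B(D_n))_n$ is a model-$B$ realization with initial law $\nu$, while $(\pi_C(D_n))_n$ is a model-$C$ realization started from a configuration $\le\bullet^\Z$; the monotonicity of $\cC$ (Lemma \ref{le-monot}), applied inductively on $n$ with the same $U$, yields $\pi_C(D_n)\le Z_n$. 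Since $\pi_B(d)\le\pi_C(d)$ for every letter $d\in\{\circ,b,g\}$, we obtain $\pi_B(D_n)\le Z_n$, hence $\pr{\pi_B(D_n)_0=\bullet}\le\pr{Z_{0,n}=\bullet}=d_n$. Taking the supremum over $\nu$ gives $d_n^A\le d_n$. The sole non-routine step is the collision enumeration underpinning the $\pi_B$-identity; everything else is formal manipulation.
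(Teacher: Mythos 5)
Your proof is correct and is essentially the argument the paper intends: the paper states Lemma \ref{le-coupling} without any written proof, treating the intertwining of the model-$D$ update with $\cB$ and $\cC$ as evident from the design of the merging rules \eqref{eq-merge}, and your case analysis (result is $\pi_B$-occupied iff exactly one blue arrival, $\pi_C$-occupied iff at least one arrival) together with your use of Lemma \ref{le-monot} and Corollary \ref{co-C} for the consequences supplies exactly the omitted details. No gaps.
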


\begin{figure}[Htb]
\begin{center}
\includegraphics[width=7cm]{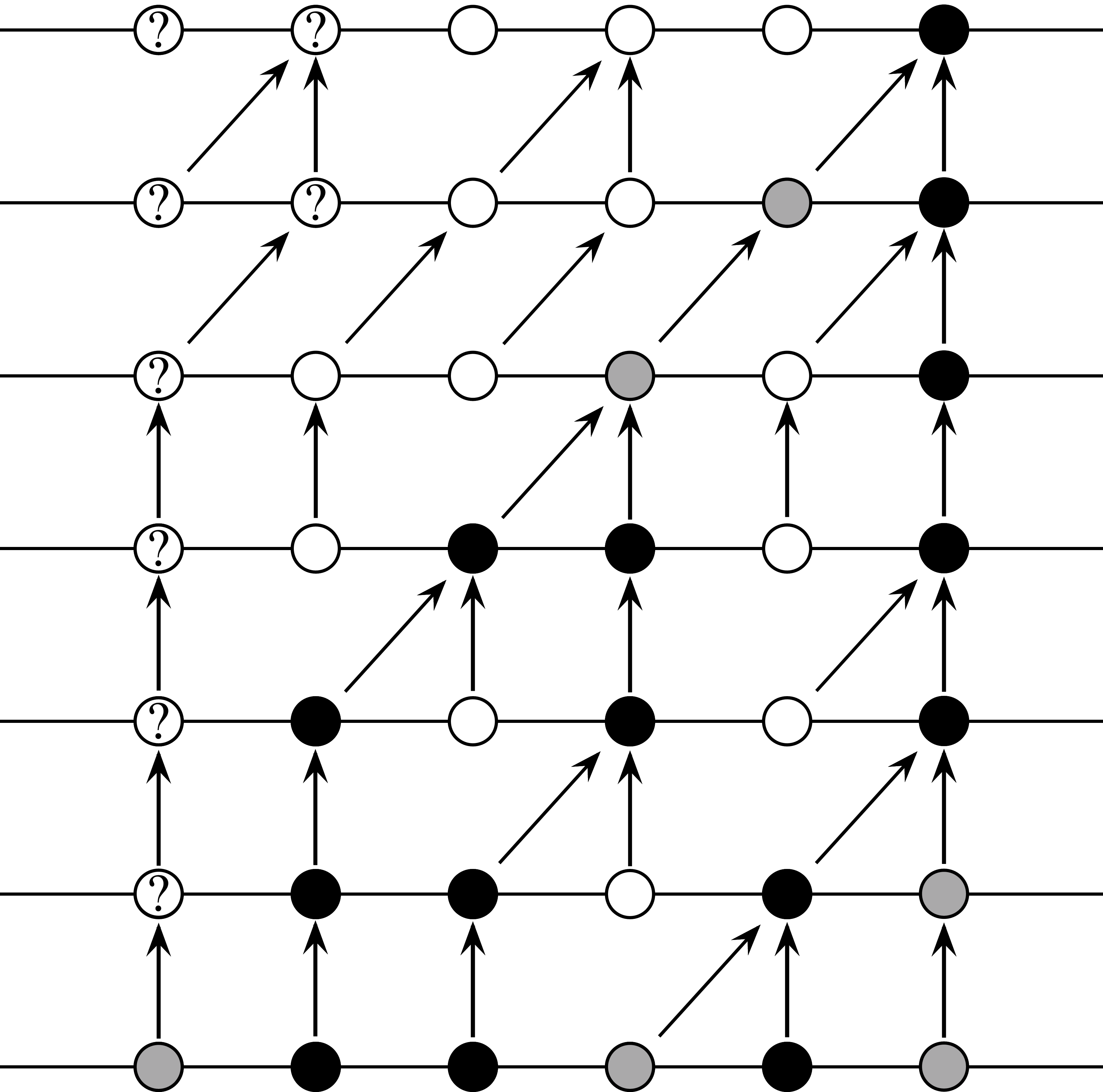}
\caption{A realization of process $D$ (gray for g(reen),
 black for  b(lue)).}\label{colors}
\end{center}
\end{figure}

Let $(D_n)_n$ be a realization of model $D$ with $D_0$ being defined as follows: the r.v.'s $D_{i,0}$ are i.i.d. with $\pr{D_{0,0}=b}=\pr{D_{0,0}=g}=1/2$. At step $n$, the colors of the remaining particles are still i.i.d. and uniform : whatever the shape of the binary  tree  of coalescences leading to the presence of a particle at a given position at time $n$ (see Figure \ref{colors} for an example), if the colors of the initial particles are independent and if one of these particle's color is uniform, then, due to \eqref{eq-merge}, the resulting color will still be uniformly distributed.  
Therefore we have 
\begin{equation}\label{eq-iid}
\pr{\pi_B(D_{0,n})=\bullet} =\pr{D_{0,n} = b}= \frac{1}{2}
\pr{D_{0,n} \in \{b,g\}} = \frac{1}{2} \pr{\pi_C(D_{0,n})=\bullet} =
\frac{1}{2} d_n\:,
\end{equation}
where the last equality follows from Theorem \ref{th-density}.

\medskip

Now let $(\widetilde{D}_n)_n$ be a realization of model $D$ with
$\widetilde{D}_0=b^{\Z}$. 
Define $E=(E_{i})_i$ by
\[
E_{i} = \begin{cases} b & \mbox{if } \widetilde{D}_{i,1}=b \\
g & \mbox{if } \widetilde{D}_{i,1}= \circ \mbox{ or } g
\end{cases}\:.
\]
The r.v.'s $(E_{i})_i$ are i.i.d. with $\pr{E_{0}=b}=
\pr{E_{0}=g}=1/2$. Let us justify this point. The state at time 1 of a realization of model $A$ that
starts from $0^{\Z}$ is uniformly distributed by definition. Hence,
the state at time 1 of a realization of model $B$ that
starts from $\bullet^{\Z}$ is uniformly distributed. And $E$ has the
same law as the latter up to the transformation $b\leftrightarrow \bullet,
\ g \leftrightarrow \circ$. 

So we have $E\sim D_0$.  
Observe also that
$\pi_B(\widetilde{D}_1)= \pi_B(E)$. 
We deduce that, for all $n\geq 1$, we have 
$\pi_B(\widetilde{D}_n) \sim \pi_B(D_{n-1})$. In particular, using (\ref{eq-iid}),
\[
\pr{\pi_B(\widetilde{D}_{0,n})=\bullet} = \pr{\widetilde{D}_{0,n}=b}=
\pr{D_{0,n-1}=b} = d_{n-1}/2\:. 
\]
This completes the proof of Prop. \ref{pr-coupling}.

\subsection*{Conclusion}

The following question remains: does there exist a {\em
  positive-rates} PCA which is non-ergodic with a unique invariant
measure? 

Let us provide some context. By definition, a PCA  has {\em
  positive-rates} if all its probability transitions are different
from 0 and 1 (more formally, if $f: \Sigma^V \rightarrow \cM(\Sigma)$
is the transition function, then $\forall u \in \Sigma^V,\forall v\in
\Sigma, \ f(u)(v) \in (0,1)$). 
It had been a long standing
conjecture that all 1-dimensional positive-rates PCA are
ergodic. In \cite{gacs}, G\'{a}cs disproved the conjecture by exhibiting a
complex counter-example with several invariant measures. The existence
of the intermediate case (unique but non-attractive invariant measure)
remains open. A priori, it is not possible to perturbate model $A$
to get a positive rates example.




\end{document}